\tikzstyle{condition} = [rectangle, text centered, draw=gray]
	\tikzstyle{from} = [<-, shorten <=1pt, >=stealth', semithick]
	\tikzstyle{into} = [->,shorten <=1pt, >=stealth', semithick]
	\tikzstyle{double} = [<->,shorten <=1pt, >=stealth', semithick]
	\tikzstyle{u} = [shape=circle, draw] 
	\tikzstyle{e} = [blue] 
	\tikzstyle{o} = [red]  
	\tikzstyle{i} = [xshift=5mm] 
\providecommand{\keywords}[1]
{
  \small	
  \textbf{\textit{Keywords: }} #1
}
\title{Transporting treatment effects from difference-in-differences studies}
\author{Audrey Renson$^1$, Ellicott C. Matthay$^1$, and Kara E. Rudolph$^2$ \\[2ex]
\small $^1$Department of Population Health, New York University Grossman School of Medicine, New York, United States\\%
\small $^2$Department of Epidemiology, Mailman School of Public Health, Columbia University,
New York, United States}
\date{
    \today
}
\newtheorem{assumption}{Assumption}
\newtheorem{theorem}{Theorem}
\newcommand\ind{\protect\mathpalette{\protect\independenT}{\perp}}
\def\independenT#1#2{\mathrel{\rlap{$#1#2$}\mkern2mu{#1#2}}}
\newcommand{\E}{\mathbb{E}}
\begin{document}

\maketitle

\begin{abstract}
    Difference-in-differences (DID) is a popular approach to identify the causal effects of treatments and policies in the presence of unmeasured confounding. DID identifies the sample average treatment effect in the treated (SATT). However, a goal of such research is often to inform decision-making in target populations outside the treated sample. Transportability methods have been developed to extend inferences from study samples to external target populations; these methods have primarily been developed and applied in settings where identification is based on conditional independence between the treatment and potential outcomes, such as in a randomized trial. We present a novel approach to identifying and estimating effects in a target population, based on DID conducted in a study sample that differs from the target population. We present a range of assumptions under which one may identify causal effects in the target population and employ causal diagrams to illustrate these assumptions. In most realistic settings, results depend critically on the assumption that any unmeasured confounders are not effect measure modifiers on the scale of the effect of interest (e.g., risk difference, odds ratio). We develop several estimators of transported effects, including g-computation, inverse odds weighting, and a doubly robust estimator based on the efficient influence function. Simulation results support theoretical properties of the proposed estimators. As an example, we apply our approach to study the effects of a 2018 US federal smoke-free public housing law on air quality in public housing across the US, using data from a DID study conducted in New York City alone.
\end{abstract}

\keywords{difference-in-differences, transportability, efficiency, causal inference}

\section{Introduction}
Difference-in-differences (DID) is a popular identification strategy when studying the causal effects of large-scale social and economic policies \cite{lechner2011estimation, roth2023s}. DID is appealing when: (i) randomization is not feasible, (ii) there is variation across jurisdictions and over time in terms of whether a policy was adopted, and (iii) not all variables that are confounders of the policy-outcome relationship are measured, leading to concerns about confounding bias 
\cite{imbens2009recent}. By comparing pre- and post-policy outcomes in both the jurisdiction implementing the policy and a comparable jurisdiction without the policy, and making a so-called parallel trends assumption (i.e., that changes in average potential outcomes over time are independent of policy adoption) \cite{lechner2011estimation,sofer2016negative}, DID can identify the causal effect of the policy on the outcome, even in settings where unmeasured variables would confound either (i) a pre-post analysis or (ii) a post-policy comparison between the treated and untreated jurisdictions.  

An important (often under-recognized) aspect of DID is that it identifies the average treatment effect among the treated (ATT) in the post-policy period, and not the average treatment effect (ATE) or other common parameters of interest \cite{lechner2011estimation}. For example, the ATT in a study of a policy raising the minimum wage is the effect of the policy on outcomes for the population living in the jurisdiction(s) that actually raised the minimum wage, and not the population living in all the jurisdictions in the study---those with and without the policy. ATT estimates resulting from a DID analysis are therefore most directly informative as to whether to maintain or discontinue policies in those locations. However, a major goal of studies that use DID is often to inform policy decisions by governments that have not yet adopted the policy of interest; in the minimum wage example, it may be of interest to inform decisions by the federal government or states with less generous minimum wage laws. Naïvely considering the estimated policy effects to apply to untreated jurisdictions requires the additional, strong assumption that there are no effect measure modifiers (measured or unmeasured) whose distribution varies between the treated jurisdiction(s) under study and the untreated jurisdiction(s) to which one wishes to make inferences \cite{lesko2017generalizing, pearl2014external}. For example, such an extrapolation would be biased if effects of the minimum wage differ by age, and age distributions differed across states. 

Generalizability and transportability methods have been developed with the goal of formally extending inferences made in one population to another population in the presence of effect heterogeneity \cite{cole2010generalizing,westreich2017transportability, rudolph2023efficiently}. These methods have mainly been applied in contexts where identification within the study sample is based on an unconfoundedness assumption, most typically in randomized controlled trials (RCTs). Inferences from RCTs generally apply only to the people participating in the trial, which may differ meaningfully from the true target population. We define ``target population'' to be the population to whom inference is desired, as dictated by substantive concerns. For example, in drug trials, patients at higher risk of the outcome are often oversampled to enhance power, and so effect estimated may be biased for the effect in the target population of patients who would receive treatments in practice \cite{greenhouse2008generalizing, stuart2015assessing}. Methods exist to quantitatively extend (i.e., transport or generalize) effects estimated in RCTs to target populations other than the included study sample, possibly alleviating the well-known tradeoff between internal and external validity in such studies \cite{colnet2020causal}.

It is plausible that transportablity methods could be used to quantitatively extend causal effects estimated from DID studies to target populations other than the treated sample, possibly alleviating the well-known tradeoff between internal and external validity in such DID studies as well. However, to our knowledge, neither identification assumptions nor estimators for transporting DID estimates have been addressed in the literature. DID presents special challenges for transportability because of the presumed existence of unmeasured confounders. Standard approaches to transportability assume that a conditional average treatment effect is constant between the sample and target population after conditioning on a measured set of covariates; if any unmeasured confounders in a DID application are also effect measure modifiers of the treatment-outcome relationship, then the existence of these unmeasured confounders creates complexities in evaluating this condition which have not, to our knowledge, been explored. Causal diagrams \cite{pearl1995causal} may facilitate such an exploration, as they have been essential in understanding assumptions for identification of transported effects \cite{pearl2011transportability, pearl2014external,rudolph2023efficiently}, but have seen limited use in DID settings \cite{sofer2016negative,caniglia2020difference}. This disconnect may be because causal diagrams generally only capture nonparametric independence assumptions \cite{pearl1995causal}, whereas parallel trends is a semiparametric assumption partially restricting the functional form of the outcome distribution \cite{roth2023parallel}.

This paper develops a formal approach to identification and estimation of effects in a target population, based on DID conducted in a study sample that differs from the target population. This paper is framed as transportability in the sense that we assume the study sample is not a subset of the target population \cite{dahabreh2020extending, westreich2017transportability}, though our results can easily be extended to the case where the study sample is nested within the target population. We employ causal diagrams to understand the sampling mechanism (i.e, the model that distinguishes the study sample from the target population), and show that our results rely crucially on the assumption that unmeasured confounders are either independent of the sampling process or are not effect modifiers on the scale of the effect being estimated (in this paper, we focus on additive effects such as the ATT and ATE, but our results can be generalized to non-additive measures, such as risk ratios). Section \ref{sec:prelim} describes the observed data and preliminary assumptions, Section \ref{sec:id} presents key identification results linking the observed data in the sample to causal quantities in the target population, and Section \ref{sec:est} presents estimators (including a doubly robust estimator based on the efficient influence function) for these quantities, which are illustrated using simulation in Section \ref{sec:sim}. Section \ref{sec:disc} concludes.

\section{Preliminaries}\label{sec:prelim}
Suppose we observe data on the variables $W_i, A_i, Y_{i0}$, and $Y_{i1}$ in a study sample containing $n$ individuals or units ($i=1,...,n$), where $W_i$ are (possibly multivariate) baseline covariates measured just before exposure, $A_i$ is a binary exposure, and $Y_{it}$ $(t=0,1)$ are outcomes measured before ($t=0$) and after ($t=1$) exposure occurs. Hereafter, we drop the $i$ subscript unless needed to resolve ambiguity. Suppose that the study sample is not representative of the true target population of interest, and that the latter contains $N$ individuals ($i=1,...,N$). We let $S=1$ denote membership in the study sample and $S=0$ denote membership in the target population. We assume that outcomes are only measured in the study sample, but that treatment and covariates are measured in both the study sample and the target population. Thus, the observed data take the form $O = \{S, A, W, Y_0S, Y_1S\}.$ Throughout, we use $f(x|\cdot)$ to denote a conditional density if $x$ is continuous and a conditional probability mass function if $x$ is discrete. Caligraphic uppercase letters denote the support of a random variable. 

We use $Y_t(a)$ to denote a potential outcome, or the outcome that would have occurred if exposure $A$ had been set by intervention to the value $a$. We assume the following throughout:
\begin{assumption}\label{asn:noint} (No interference)
    $Y_{it}(a_i, a_{i'})=Y_{it}(a_i)$ for $i\neq i'$, with $i,i'$ such that $\{S_i, S_{i'}\} \in \{0,1\}^2$
\end{assumption}
\begin{assumption}\label{asn:tvi} (Treatment version irrelevance)
    If $A_i=a$, then $Y_{it}=Y_{it}(a)$ with $i,i'$ such that $\{S_i, S_{i'}\} \in \{0,1\}^2$
\end{assumption}

Assumptions \ref{asn:noint} and \ref{asn:tvi} are standard in the causal inference and transportability literature and are not specific to the DID setting. Assumption \ref{asn:noint} requires that one unit's treatment does not impact another unit's potential outcome in either the sample or the target. Assumption \ref{asn:tvi} requires that treatments are sufficiently well-defined that observed outcomes can stand in for potential outcomes under treatment with the observed exposures, and that versions of the treatment do not differ between the sample and target. Assumptions \ref{asn:noint} and \ref{asn:tvi} are often referred to together as the stable unit treatment value assumption (SUTVA). Assumption \ref{asn:noint} can be relaxed in DID settings \cite{xu2023difference} but this is outside the scope of our work.

\subsection{Difference-in-differences in the study sample}
Here, we give a brief review of causal identification based on DID, which we will assume is the basis of identification in the study sample. Specifically, we invoke the following assumptions, standard in the DID literature \cite{heckman1997matching, abadie2005semiparametric, lechner2011estimation}: 
\begin{assumption}\label{asn:noant} (No anticipation): $Y_0(a)=Y_0$ for $a=0,1$
\end{assumption}
\begin{assumption}\label{asn:pos} (Positivity of treatment assignment)
    If $f(w|S=1)>0$ then $f(A=0|W=w, S=1)>0$ with probability 1 for all $w\in \mathcal{W}$
\end{assumption}
\begin{assumption}\label{asn:pt} (Parallel Trends): For $w\in \mathcal{W}$:
    	\[
    	\E\{Y_t(0)-Y_{t-1}(0)|A=1, S=1, W=w\}=\E\{Y_t(0)-Y_{t-1}(0)|A=0, S=1, W=w\} \]
\end{assumption}
Assumption \ref{asn:noant} states that future treatment does not impact the prior outcomes (this assumption can also be relaxed to allow anticipation up to a known time period \cite{callaway2021difference}). It is well known that under Assumptions \ref{asn:noint}-\ref{asn:pt}, it is possible to identify the $W$-conditional SATT, defined as $\eta(w)\equiv\E[Y_1(1) - Y_1(0)|W=w, A=1, S=1].$ Specifically, under Assumptions \ref{asn:noint}-\ref{asn:pt} we have:
\begin{align}\label{eq:did}
     \eta(w)
    &=\E[Y_1-Y_0|W=w, A=1,S=1]- \E[Y_1-Y_0|W=w, A=0, S=1] \\
    \nonumber &\equiv m_1(w) - m_0(w),
\end{align}
where we define $m_a(w) = \E[Y_1-Y_0|W=w, A=a, S=1]$. By extension, the unconditional sample ATT (abbreviated SATT, usually the focal parameter in DID) is identified as $\E[Y_1(1) - Y_1(0)|A=1,S=1]=\E[m_1(W) - m_0(W)|A=1,S=1].$ However, and importantly for our discussion, Assumptions \ref{asn:noint}-\ref{asn:pt} are not sufficient to identify parameters unconditional on $A=1$, such as the sample average treatment effect (SATE), defined as $\E[Y_1(1) - Y_1(0)|S=1].$ This is because parallel trends provides information about potential outcomes only among the treated group; without further assumptions there is no basis for identification of potential outcomes for the group $A=0$. Moreover, and as is the focus of this paper, additional assumptions would be required to identify effects outside the study sample, since parallel trends and positivity of treatment assignment are conditional on $S=1$.

\section{Identification of transported treatment effects}\label{sec:id}
In this section we consider the task of equating a causal estimand (i.e., one specified in terms of potential outcomes) in the target population to a function of the distribution of the observed data, $O$. Specifically, we focus on the population average treatment effect in the treated (PATT), defined as $\E[Y_1(1)-Y_1(0)|A=1, S=0]$, and the population average treatment effect (PATE) defined as $\E[Y_1(1)-Y_1(0)|S=0]$. We begin by introducing a motivating example, after which we introduce and discuss a set of sufficient identifying assumptions, and present identifying formulas which equal each causal estimand if the assumptions are true.
\subsection{Motivating example}
As of July 30, 2018, a US Department of Housing and Urban Development (HUD) rule required all public housing authorities to implement smoke-free housing (SFH) policies banning smoking in residences. We will consider the question: what effect did the federal SFH policy have on indoor air quality in US public housing developments? To answer this question, we consider transporting the results from a study conducted in public housing buildings in New York City (NYC) only. Specifically, a team of investigators conducted air quality monitoring in living rooms and common areas of NYC Housing Authority (NYCHA) public housing buildings, both before the federal policy went into effect (from April to July 2018), and again approximately every six months for 3 years post-policy. A DID analysis was conducted to estimate the effect of the policy on indoor air nicotine (among other measures), using as a comparison group a sample of households receiving housing assistance through a program known as Section 8, a public subsidy to supplement rental costs in private sector buildings \cite{anastasiou2023long}. Air quality was sampled in stairwells, hallways, and living rooms; for simplicity here we focus on stairwells. In addition to the air quality data, the team also conducted a baseline survey in a sample of residents, including information about tobacco smoking. (We note that in the original study, building inclusion criteria were high-rise [$>$15 floors], large resident population [$>$150 units], at least 80\% Black or Hispanic residents, and at least 20\% younger than 18 years; for simplicity we ignore these criteria here.) 

In this example, we let $i$ ($i=1,...n$) denote an individual participant in the baseline survey. Then $Y_{it}$ is a continuous variable representing the average of log-transformed air nicotine in stairwells in participant $i$'s building (where we let $t=0$ denote April-July 2018 and $t=1$ denote April-September 2021), $A_i=1$ denotes residence in a public housing building (and hence exposed to the SFH policy),  $A_i=0$ denotes residence in a Section 8 household (and hence unexposed),  and $W_i$ is a binary variable capturing whether participant $i$ reported any tobacco smoking inside the home in the prior 12 months. Thus, if Assumptions \ref{asn:noint}-\ref{asn:pt} hold (along with correct model specification and no measurement error), the DID results in this study may be interpreted as estimates of the effect of the SFH policy on indoor air quality for NYC public housing residents only. Though the study is informative as to the effect of the policy in NYC, it is also of interest to federal policymakers to estimate the PATT, which here represents the effect of the HUD rule on indoor air nicotine in April-September 2021 in public housing in the US. Data on the target population come from the American Housing Survey (AHS), which collects data on a biennial basis among a representative sample of US households; the 2015 sample includes information to construct $W_i$. We let $S_i=1$ denote participation in the baseline NYC sample survey, and $S_i=0$ denote participation in the AHS.

\subsection{Naïve approach}
We begin with an approach to transportability that does not take into account the causal structure of DID (in particular, does not take into account unmeasured confounding), after which we will use causal diagrams to illustrate why this approach will usually fail. Since all identified potential outcomes in equation (\ref{eq:did}) are conditional on $A=1$, an obvious starting point in attempting to transport effects identified through DID is to identify the PATT. Inspecting equation (\ref{eq:did}), a natural approach may be to assume that the $W$-conditional SATTs (conditional on each value of $w$) are equal between the sample and the target. If this were the case, one could identify the PATT using the following expression:
\begin{align}
    \nonumber \E[Y_1^1-Y_1^0|A=1,S=0] &= \E[ \E(Y_1^1-Y_1^0 |W, A=1, S=1)|A=1,S=0] \\
    &=\E[m_1(W)-m_0(W)|A=1,S=0] \label{eq:naive}
\end{align}
Specifically, in order for equation (\ref{eq:naive}) to hold, the following assumptions would be sufficient: 
\begin{assumption}\label{asn:exch_w} (Exchangeability of selection)
    $Y_t(a) \ind S | W, A=1$
\end{assumption}
\begin{assumption}\label{asn:pos_w} (Positivity of selection)
    If $f(w|A=1, S=0)>0$ then $f(S=1|W=w, A=1)>0$ with probability 1 for all $w \in \mathcal{W}$
\end{assumption}
Assumption \ref{asn:exch_w} states that, among the treated group, the distributions of potential outcomes in the sample and target are equal after conditioning on $W$. Assumption \ref{asn:pos_w} states that any covariate values that may occur in the target treated group must also be possible in the sample treated group.  Assumptions \ref{asn:exch_w} and \ref{asn:pos_w} together imply that the $W$-conditional ATT is constant across settings, and hence that the PATT is identified by equation (\ref{eq:naive}). 

Though Assumption \ref{asn:exch_w} is similar to the exchangeability of selection assumption usually invoked for transportability of the average treatment effect (ATE) (see e.g. Assumption 3 in \cite{colnet2020causal}), it differs importantly in that it must hold conditional on $A=1.$  This is so because DID was the basis for identification in the sample, so any effects identified in the sample (whether SATT or $W$-conditional SATT) are conditional on $A=1$, and the basis for transportability is therefore the constancy of the $W$-conditional SATTs (not SATEs) across settings. This constancy is dependent on replacing potential outcomes in the \textit{treated} target with those in the \textit{treated} sample, and for this replacement to be licensed, Assumption \ref{asn:exch_w} must condition on $A=1$.

Unfortunately, conditioning on $A=1$ means Assumption \ref{asn:exch_w} is unlikely to hold in most DID applications. To illustrate this point, Figure \ref{fig:swig1} displays a single world intervention graph (SWIG) \cite{richardson2013single} depicting a common DID setting. SWIGs are similar to causal directed acyclic graphs (DAGs) in that nodes represent random variables, directed arrows represent direct effects, and conditional independencies are given by \textit{d}-separation rules \cite{pearl1995causal}. SWIGs extend DAGs by depicting interventions on variables as split nodes ($A|a$ in Figure \ref{fig:swig1} indicates intervening to set $A=a$), and any variables affected by the intervention variable become potential outcomes under that intervention. Figure \ref{fig:swig1} represents a standard DID scenario in the sense that $U$ represents unmeasured common causes of $A$ and $Y_1$ that would confound a cross-sectional comparison, and whose existence motivates the use of DID. In Figure \ref{fig:swig2}, we add $S$ with arrows into $A$ and $W$, depicting the usual case that distributions of these variables differ across settings. (In our example, an approximately equal number of $A=0$ and $A=1$ buildings were selected purposively, altering the distribution of $A$.) Following the convention of selection diagrams, arrows emanating from $S$ represent ``exogenous conditions that determine the values of the variables to which they point'' \cite{pearl2011transportability}. Assumption \ref{asn:exch_w} would not be expected to hold in Figure \ref{fig:swig2} due to the existence of the path $S\rightarrow A \leftarrow U \rightarrow Y_1(a)$, on which $A$ is a collider. Thus, this path is opened by conditioning on $A$ (and not closed by conditioning on $W$), rendering $S$ potentially associated with $Y_1(a)$ conditional on $\{W, A=1\}$. For the same reason, $S$ is potentially associated with $Y_0$ conditional on $\{W, A=1\}$. Importantly, such paths will be present whenever (i) there is unmeasured confounding, and (ii) the target and sample differ in the distribution of treatment (conditional on $W$). We expect (i) to always be the case (otherwise DID would be unnecessary). We also expect (ii) to be the case except in rare circumstances such as when $A$ is experimentally assigned in both the target and the sample. Importantly, this failure of Assumption \ref{asn:exch_w} occurs regardless of whether the unmeasured confounders $U$ differ marginally in distribution between the target and sample (i.e., whether or not there is an arrow from $S$ into $U$ in Figure \ref{fig:swig2}). The issue of bias caused by conditioning on the treatment in transportability analyses has been discussed previously in the context of RCTs \cite{chiu2022selection}.

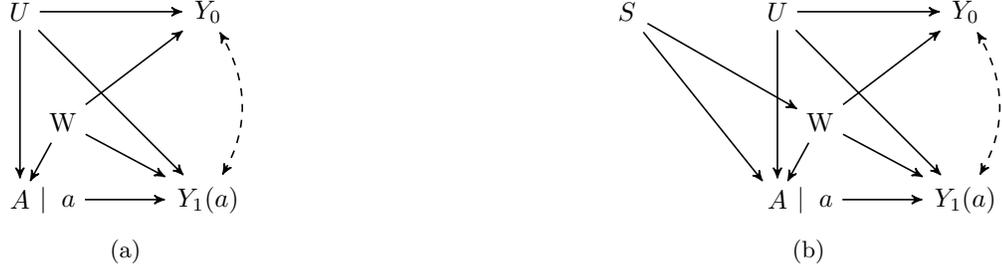
\begin{figure}
    \begin{subfigure}[t]{0.45\textwidth}
        \centering
        \begin{tikzpicture}[node distance=2.5cm]	
	\node (U) {$U$};
	\node (A) [below of=U] {$A$}
		edge[from] (U);
	\node (a) [i, below of=U] {$|\;\;a$};
 
        \node (W) [below right of=U, xshift=-1.2cm, yshift=3mm] {W}
            edge[into] (A);
 
	\node (Y0) [right of=U] {$Y_0$}
		edge[from] (W)
		edge[from] (U);
	\node (Y1) [below of=Y0] {$Y_1(a)$}
		edge[from] (a)
		edge[from] (W)
		edge[from] (U)
            edge[double, out=60, in=-60, dashed] (Y0);
    \end{tikzpicture}
        \caption{} \label{fig:swig1}
    \end{subfigure}
    \hfill
    \begin{subfigure}[t]{0.45\textwidth}
        \centering
        \begin{tikzpicture}[node distance=2.5cm]	
	\node (U) {$U$};
	\node (A) [below of=U] {$A$}
		edge[from] (U);
	\node (a) [i, below of=U] {$|\;\;a$};
 
        \node (W) [below right of=U, xshift=-1.2cm, yshift=3mm] {W}
            edge[into] (A);
 
	\node (Y0) [right of=U] {$Y_0$}
		edge[from] (W)
		edge[from] (U);
	\node (Y1) [below of=Y0] {$Y_1(a)$}
		edge[from] (a)
		edge[from] (W)
		edge[from] (U)
            edge[double, out=60, in=-60, dashed] (Y0);
        \node (S) [left of=U, xshift=5mm]{$S$}
            edge[into] (W)
            edge[into] (A);
    \end{tikzpicture}
        \caption{} \label{fig:swig2}
    \end{subfigure}
    \caption{Single world intervention graphs (SWIGs) illustrating common scenarios relevant for (a) DID in a sample and (b) transportability of DID results to a target population. Double-headed arrows indicate unmeasured common causes between two nodes. }\label{fig:swig}
\end{figure}

\subsection{Identification via restrictions on effect heterogeneity}
The analysis in the previous subsection illustrated that identification of transported effects based on exchangeability according to measured covariates (as in Assumption \ref{asn:exch_w}) is unlikely to be tenable in DID studies, since conditioning on $A=1$ will typically cause unmeasured covariates $U$ to be associated with the sampling mechanism $S$ regardless of whether this association exists marginally. Thus, Assumption \ref{asn:exch_w} would likely only be plausible if $U$ were included in the conditioning event, but this would not aid identification since $U$ is unmeasured. Fortunately, it is possible to identify transported effects when variables needed for exchangeability are unmeasured, so long as those variables are not also effect measure modifiers on the scale on which the causal effects are being measured, which we illustrate here. (As an aside, restrictions on effect heterogeneity have also been used for identification in instrumental variable approaches \cite{robins1994correcting}.)

We begin by expressing the concept that the existence of unmeasured confounders $U$ drive our decision to use DID by stating the following Assumptions, which relate only to identification in the sample:
\begin{assumption}\label{asn:exch_u_tx} (Latent exchangeability of treatment) $Y_t(a) \ind A|W, U, S=1$ for $t\in \{0,1\}$ and $a \in \{0, 1\}$
\end{assumption}
\begin{assumption}\label{asn:pos_u_tx} (Latent positivity of treatment)  If $f(u, w|S=1)>0$ then $f(A=a|U=u, W=w, S=1)>0$ with probability 1 for $a \in \{0,1\}$ and $\{u, w\} \in \{\mathcal{U}, \mathcal{W}\}$
\end{assumption}
Together, Assumptions \ref{asn:exch_u_tx} and \ref{asn:pos_u_tx} can be understood as stating that there is some set of unmeasured covariates $U$ such that, if measured, would (together with $W$) be sufficient to adjust for all confounding between the treated and untreated samples. In a sense, Assumption \ref{asn:exch_u_tx} does not introduce any new restrictions because one can define $U$ to be whatever variables (known or unknown) confound the cross-sectional association between $A$ and $Y_t$ and which motivate the use of DID in the first place. In contrast, Assumption \ref{asn:pos_u_tx} may be restrictive; the requirement that unmeasured confounding variables $U$ (known or unknown) have overlapping distribution between the treated and untreated may not hold in some settings and is not necessary for identification of the SATT via DID. (As an aside, it can be shown that parallel trends will hold if (i) Assumptions \ref{asn:exch_u_tx} and \ref{asn:pos_u_tx} hold and (ii) $U$ exerts a constant effect on $Y_0$ and $Y_1$ on the additive scale within levels of $W$ among the treated \cite{sofer2016negative}. However, in this paper we assume parallel trends to hold and do not consider what conditions render it plausible or not.) Next consider the follow assumptions aimed at identification in the target:
\begin{assumption}\label{asn:exch_u} (Latent exchangeability of selection) $Y_t(a) \ind S|W, U, A$ for $t\in \{0,1\}$ and $a \in \{0, 1\}$
\end{assumption}
\begin{assumption}\label{asn:pos_u} (Latent positivity of selection)
    If $f(u, w|A=a, S=0)>0$ then $f(S=1|U=u, W=w, A=a)>0$ with probability 1 for $a \in \{0,1\}$ and $\{u, w\} \in \{\mathcal{U}, \mathcal{W}\}$
\end{assumption}
Assumption \ref{asn:exch_u} modifies Assumption \ref{asn:exch_w} by allowing for $U$ in the conditioning event, so that the potential outcomes are equal in distribution between the sample and the target after conditioning on $W$, $U$, and $A$. Similarly, Assumption \ref{asn:pos_u} requires all possible values of both $U$ and $W$ in the target population to also be possible in the sample. In addition to conditioning on $U$, Assumptions \ref{asn:exch_u} and \ref{asn:pos_u} modify Assumption \ref{asn:exch_w} and \ref{asn:pos_w} by requiring their respective conditions for both the treated and untreated, not just the treated. This additional change allows us to identify more treatment effects including the PATE; if our focus were solely on the PATT, each assumption for $a=0$ only would be sufficient. Similarly to Assumption \ref{asn:exch_w}, we can assess Assumption \ref{asn:exch_u} graphically: if (as is the case in Figure \ref{fig:swig1}) the variables $\{W, U, A\}$ \textit{d}-separate $S$ from $Y_0$ and $S$ from $Y_1(a)$, then Assumption \ref{asn:exch_u} holds. 

Because $U$ is unmeasured, Assumptions \ref{asn:exch_u} and \ref{asn:pos_u} are insufficient for transportability; they render effects conditional on $\{W, U, A\}$ constant across settings, but these effects are not themselves identifiable. However, transportability is still possible if $U$ is not an additive effect measure modifier, which we state as follows:
\begin{assumption}\label{asn:homo} (U-homogeneity) $\E[Y_1(1)-Y_1(0)|U, W, S, A] =  \E[Y_1(1)-Y_1(0)|W, S, A]$
\end{assumption}
Note that Assumption \ref{asn:homo} does not require that $U$ not be a confounder, only that the treatment effect does not vary across levels of $U$ on the additive scale.  Note also the scale-dependence of Assumption \ref{asn:homo}; for example, it cannot hold for both log-transformed $Y_t$ and $Y_t$ on its natural scale, unless there is no effect of treatment or $U$ is unassociated with $Y_t$. The fact that Assumption \ref{asn:homo} refers to the additive scale follows from our focus is on additive treatment effects; if effects on an alternate scale (such as risk ratios) were of interest, then Assumption \ref{asn:homo} would need to be revised to express treatment effect homogeneity on that scale. 

If effects on the additive scale are homogeneous with respect to $U$, then additive effects conditional on $\{W, U, A\}$ (which are constant across settings by Assumptions \ref{asn:exch_u} and \ref{asn:pos_u}) do not depend on $U$, yielding identification of the PATT. This is stated in the following theorem:
\begin{theorem}\label{thm:trans_pt} (Transportability for difference-in-differences) Under Assumptions \ref{asn:noint}-\ref{asn:pt} and \ref{asn:exch_u}-\ref{asn:homo}, the PATT is identified as $\E[Y_1(1)-Y_1(0)|A=1,S=0] =\E[ m_1(W)-m_0(W) |A=1, S=0 ]$. Moreover, if Assumptions \ref{asn:exch_u_tx} and \ref{asn:pos_u_tx} also hold, then the PATE is identified as $\E[Y_1(1)-Y_1(0)|S=0] =\E[ m_1(W)-m_0(W) |S=0 ]$, and the population average treatment effect in the untreated (PATU) is identified as $\E[Y_1(1)-Y_1(0)|A=0, S=0] =\E[ m_1(W)-m_0(W) |A=0, S=0 ]$.
\end{theorem}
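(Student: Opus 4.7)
The plan is to identify each of the three population parameters by conditioning on the unmeasured $U$ to invoke the latent exchangeability assumptions, then using the $U$-homogeneity assumption to remove $U$ from the resulting expressions, and finally applying the DID identification result from equation~(\ref{eq:did}) to replace sample-based conditional potential-outcome contrasts with observable quantities.

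For the PATT, I would first apply iterated expectations conditional on $W$ to write $\E[Y_1(1)-Y_1(0)\mid A=1,S=0]$ as $\E\{\tau_0(W)\mid A=1,S=0\}$ where $\tau_0(W)\equiv\E[Y_1(1)-Y_1(0)\mid W,A=1,S=0]$. The crux is to show $\tau_0(W)=m_1(W)-m_0(W)$. I would do this by iterating once more on $U$ inside, giving $\tau_0(W)=\E_U\{\E[Y_1(1)-Y_1(0)\mid W,U,A=1,S=0]\mid W,A=1,S=0\}$, then applying Assumption~\ref{asn:exch_u} (latent exchangeability of selection, with $a=1$) to replace $S=0$ by $S=1$ in the inner expectation. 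Assumption~\ref{asn:pos_u} (with $a=1$) is what licenses this replacement for almost every relevant $(u,w)$. Next, $U$-homogeneity (Assumption~\ref{asn:homo}) collapses the inner expectation to $\E[Y_1(1)-Y_1(0)\mid W,A=1,S=1]$, which no longer depends on $U$, so the outer $\E_U$ is trivial. Finally, equation~(\ref{eq:did}) identifies this quantity as $m_1(W)-m_0(W)$, so $\tau_0(W)=m_1(W)-m_0(W)$ and the PATT formula follows.

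For the PATU, the same argument works but with an extra step. Starting from $\E[Y_1(1)-Y_1(0)\mid W,U,A=0,S=0]$, I would first apply Assumption~\ref{asn:exch_u} (with $a=0$, supported by Assumption~\ref{asn:pos_u}) to move from $S=0$ to $S=1$, then apply Assumption~\ref{asn:exch_u_tx} (latent exchangeability of treatment in the sample, supported by Assumption~\ref{asn:pos_u_tx}) to move from $A=0$ to $A=1$ while remaining conditional on $W,U,S=1$. After this, the argument proceeds as above: $U$-homogeneity removes $U$, and DID gives $m_1(W)-m_0(W)$. Thus $\E[Y_1(1)-Y_1(0)\mid W,A=0,S=0]=m_1(W)-m_0(W)$, and iterating over $W\mid A=0,S=0$ yields the stated identification. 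The PATE then follows from the law of total expectation, $\E[Y_1(1)-Y_1(0)\mid S=0]=\sum_{a\in\{0,1\}}\Pr(A=a\mid S=0)\,\E[Y_1(1)-Y_1(0)\mid A=a,S=0]$, combined with the PATT and PATU results.

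The main obstacle is purely bookkeeping rather than anything mathematically deep: one must carefully sequence the conditioning operations so that each exchangeability assumption is invoked on a conditioning set that matches its statement exactly, and so that $U$-homogeneity is applied only after the resulting conditional expectation refers to the sample (where $U$-homogeneity together with DID gives a formula in terms of measurable quantities). The positivity conditions in Assumptions~\ref{asn:pos_u_tx} and~\ref{asn:pos_u} are needed to ensure that all the conditional expectations being equated are well-defined on the relevant support. No nontrivial probabilistic identity beyond iterated expectations and the given independence statements is needed.
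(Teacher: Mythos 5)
Your proposal is correct and follows essentially the same route as the paper's own proof in Appendix \ref{app:trans_pt}: condition on $(W,U)$ via iterated expectations, use Assumptions \ref{asn:exch_u}--\ref{asn:pos_u} to move from $S=0$ to $S=1$ (and Assumptions \ref{asn:exch_u_tx}--\ref{asn:pos_u_tx} to move from $A=0$ to $A=1$ for the PATU), drop $U$ by Assumption \ref{asn:homo}, apply the DID identification in equation (\ref{eq:did}), and obtain the PATE by averaging the PATT and PATU over $A\mid S=0$. The only cosmetic differences are that you iterate on $W$ and then $U$ rather than on $(W,U)$ jointly, and you do not explicitly invoke Assumptions \ref{asn:noint}--\ref{asn:tvi} in the exchangeability steps as the paper does.
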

The proof of Theorem \ref{thm:trans_pt} is provided in Appendix \ref{app:trans_pt}. Importantly, Theorem \ref{thm:trans_pt} gives identifying formulas for the PATT as well as the PATE and PATU. Notably, Assumptions \ref{asn:exch_u_tx}-\ref{asn:pos_u_tx} are only required for identification of the PATE and PATU, not the PATT. This is an important distinction, particularly because one of the key advantages of DID is that identification can hold without having to assume positivity for the unmeasured confounders. As an aside, the addition of Assumptions \ref{asn:exch_u_tx} and \ref{asn:homo} to the standard identifying assumptions for DID (in our exposition, Assumptions \ref{asn:noint}-\ref{asn:pt}) also renders identifiable the SATE and the sample average treatment effect in the untreated (SATU) (shown in Appendix \ref{app:sate}). These results are intuitive: under latent exchangeability of selection, the treatment effects in the population are weighted averages of the $\{W,U,A\}-$conditional treatment effects in the sample; these conditional effects do not depend on $U$ under $U$-homogeneity. Moreover, because $U$ represents all unmeasured confounders, differences between the $W$-conditional SATT, SATE, and SATU can only be caused by effect heterogeneity according to $U$, which has been ruled out by Assumption \ref{asn:homo}. Therefore the $W$-conditional SATT, SATE, and SATU all equal one another.

Assumptions \ref{asn:exch_u_tx}-\ref{asn:homo} are not the only set of assumptions that yield identification of effects in the target when $U$ is related to the sampling mechanism, but alternative assumption sets will generally also place restrictions on unmeasured effect heterogeneity. For example, supposing that Assumptions \ref{asn:exch_u}-\ref{asn:pos_u} hold, it is possible to identify the PATT under a parallel trends assumption for both the treated and untreated counterfactual regimes (i.e., if we added to Assumption \ref{asn:pt} an equivalent expression replacing $Y_t(0)$ and $Y_{t-1}(0)$ with $Y_t(1)$ and $Y_{t-1}(1)$). However, this stronger parallel trends assumption also implies the $W$-conditional SATT, SATE, and SATU are all equal \cite{callaway2021difference}, implying effect homogeneity according to $U$. 

Table \ref{tab:assumptions} provides interpretations of each of the 12 Assumptions presented in terms of the applied question. In particular, $U$ represents unmeasured differences between public housing and Section 8 housing that impact levels of air nicotine independently of the treatment, leading investigators to pursue a DID design. For example, $U$ may represent ventilation (with $U=1$ denoting high and $U=0$ denoting low ventilation); we expect public housing buildings to more often have low ventilation and that ventilation impacts air nicotine, but ventilation was not measured in the study. In Figure \ref{fig:swig2}, arrows from $S$ into $W$ and $A$ depict measured environmental and societal conditions that lead to differing air quality and differing distributions of public housing vs. Section 8 residence across regions in the US. Since $Y_t$ was log-transformed, Assumption \ref{asn:homo} requires that for residents with the same levels of indoor smoking and separately for public housing and Section 8, the additive effect of a smoke-free housing policy on log-transformed air nicotine (and hence a type of multiplicative effect) is constant for buildings with high and low ventilation. Thus, Assumption \ref{asn:homo} would be violated if high- and low-ventilation buildings had differing baseline levels of air nicotine and the effect of a smoke-free housing policy was to decrease air nicotine by a constant absolute amount (e.g., a constant reduction in parts per million).

\begin{table}
    \centering
    \begin{tabular}{{|p{1.5in}|p{4in}|}}
    \hline
       Assumption  & Meaning in application \\ 
       \hline
       \ref{asn:noint} (No interference) & Air nicotine levels for one individual are not impacted by SFH policy exposure of other individuals \\
       &\\
       \ref{asn:tvi} (Treatment version irrelevance) & Variation in SFH implementation/enforcement do not affect air nicotine \\
       &\\
       \ref{asn:noant} (No anticipation) & Individuals did not change their behavior in anticipation of SFH \\
       &\\
       \ref{asn:pos} (Positivity of treatment assignment) & Among individuals both with and without indoor smoking in the last 12 months, there exist Section 8 residents \\
       &\\
       \ref{asn:pt} (Parallel trends) & In absence of the SFH policy, at a given level indoor smoking in the last 12 months, absolute changes in log-transformed air nicotine levels over time would have been equal between NYCHA and Section 8 residents  \\
       &\\
       \ref{asn:exch_w} (Exchangeability of selection) & Among public housing buildings nationally at a given level indoor smoking in the last 12 months, the distribution of potential nicotine levels is equal between NYC and the rest of the USA\\
       &\\
       \ref{asn:pos_w} (Positivity of selection) & There exist  NYCHA residents who both did and did not report indoor smoking \\
       &\\
       \ref{asn:exch_u_tx} (Latent exchangeability of treatment) & In NYC only, at a given level of indoor smoking and ventilation, potential air nicotine distributions are independent of Section 8 vs. NYCHA residency\\
       &\\
       \ref{asn:pos_u_tx} (Latent positivity of treatment) & There are both NYCHA and Section 8 buildings at all levels of indoor smoking and ventilation present in the study sample\\
       &\\
       \ref{asn:exch_u} (Latent exchangeability of selection) & Among public housing buildings nationally, at a given level of indoor smoking and ventilation, the distribution of potential nicotine levels is equal between NYC and the rest of the USA\\
       &\\
       \ref{asn:pos_u} (Latent positivity of selection) & There are NYCHA buildings at all levels of indoor smoking and ventilation present for public housing buildings nationally\\
       
       &\\
       \ref{asn:homo} (U-homogeneity) & Building ventilation is not an additive effect measure modifier after conditioning indoor smoking and indicators of NYC vs. remaining USA and Section 8 vs. public housing. \\
       \hline
    \end{tabular}
    \caption{Interpretation of assumptions in the motivating example}
    \label{tab:assumptions}
\end{table}

\section{Estimators}\label{sec:est}
In the section, we presume identification holds according to one of the sets of assumptions presented Theorem \ref{thm:trans_pt}, and consider the problem of estimating the statistical parameter 
\[\psi(a^*)=\E[ \eta(W) |A=a^*, S=0 ], a^*=0,1.\] (See equation (\ref{eq:did}) for the definition of $\eta(\cdot)$.) From Theorem \ref{thm:trans_pt}, we have that under Assumptions \ref{asn:noint}-\ref{asn:pt} and \ref{asn:exch_u}-\ref{asn:homo}, $\psi(1)$ equals the PATT; with the addition of Assumptions \ref{asn:exch_u_tx}-\ref{asn:pos_u_tx}, $\psi(0)$ equals the PATU and $E[\psi(A)|S=0]$ equals the PATE. 
To simplify notation in this section, let $\Delta Y=Y_1-Y_0$ denote differenced outcomes, $m_a(W)=\E[\Delta Y|W,S=1, A=a]$  denote the true outcome-difference model, and $g_{a,s}(W)=f(A=a, S=s|W)$, denote the true propensity scores for treatment assignment and selection. We use $\widehat m_a$ and $\widehat g_{a,s}$ to denote estimators of those quantities, which may or may not be correctly specified. A correctly specified model is one that converges in probability to the true population moments. We also use $P_n \{ h(O) \}=n^{-1} \sum_{i=1}^n h(O_i)$ to denote the sample average of a function $h(\cdot)$ of the observed data.
\subsection{G-computation estimator}
A g-computation estimator (also called a substitution estimator or plug-in estimator) is constructed by plugging in estimators of the empirical counterparts of the population quantities into the identifying formula in Theorem \ref{thm:trans_pt}:
\begin{align*}
    \widehat\psi_{gcomp}(a^*) &= 
    P_n \bigg \{ 
    \frac{I(A=a^*, S=0)}{P_n \{ I(A=a^*, S=0)\} }\{ \widehat m_1(W) - \widehat m_0(W)\}\bigg\} 
\end{align*}
The estimator $\widehat\psi_{gcomp}$ will be consistent and asymptotically normal if $\widehat m_a$ is correctly parametrically specified, but not necessarily otherwise.
\subsection{Inverse-odds weighted estimator}
Instead, one may have more information about the functional form of the propensity scores. The following inverse-odds weighted estimator will be consistent and asymptotically normal if $\widehat g_{a,s}$ are correctly parametrically specified, but not necessarily otherwise:
\begin{align*}
       \widehat\psi_{iow}(a^*) &= P_n \bigg\{ \bigg[ \frac{I(A=a^*, S=1) \widehat g_{a^*,0}(W)}{P_n\{ I(A=a^*, S=0)\}\widehat g_{1,1}(W)} - \frac{I(A=0, S=1) \widehat g_{a^*,0}(W)}{ P_n\{ I(A=a^*, S=0)\} \widehat g_{0,1}(W)} \bigg] \Delta Y \bigg\}
\end{align*}

\subsection{Doubly robust estimator} 
Lastly, we provide a doubly robust estimator, meaning in this case that the estimator is consistent and asymptotically normal if either $\widehat g_{a,s}$ or $\widehat m_a$ consistent of correctly-specified parametric models; it need not be the case that both are correct. A doubly robust estimator for $\psi$ is given by: 
\begin{align*}
   \widehat\psi_{dr}(a^*) &= P_n \bigg\{ \frac{I(A=a^*, S=1) \widehat g_{a^*,0}(W)}{ P_n I(A=a^*, S=0)\widehat g_{1,1}(W)}
        \{\Delta Y - \widehat m_1(W) \} \\
        &\qquad - \frac{I(A=0, S=1) \widehat g_{1,0}(W)}{P_nI(A=a^*, S=0) \widehat g_{0,1}(W_i)}
        \{\Delta Y - \widehat m_0(W) \} \\
    &\qquad + \frac{I(A=a^*,S=0)}{P_nI(A=a^*, S=0)}\{\widehat m_1(W) - \widehat m_0(W)  \} \bigg\}
\end{align*}
In Appendix \ref{app:eif}, we show the derivation of $\widehat\psi_{dr}(a^*)$ as a ``one-step'' estimator based on the efficient influence function for $\psi(a^*)$, which implies that $\widehat\psi_{dr}(a^*)$ is asymptotically efficient. The fact that $\widehat\psi_{dr}(a^*)$ corresponds to the efficient influence function also leads to an estimator of the asymptotic variance under the assumption that $\widehat g_{a,s}$ and $\widehat m_a$ are both correctly specified, which we also provide in Appendix \ref{app:eif}. In Appendix \ref{app:dr}, the double robust property of $\widehat\psi_{dr}(a^*)$ is demonstrated, and proof of the consistency of the g-computation and IOW estimators are provided as a bi-product of the double robust property. Code to implement the proposed estimators is available at \verb|https://github.com/audreyrenson/did_generalizability|.
\section{Simulation study}\label{sec:sim}
We generated $nsims=200$ datasets of $nobs=10,000$ each, according to the following data generating mechanism:
\begin{align*}
    S &\sim Bernoulli(0.5) \\
    U &\sim Bernoulli( logit^{-1}[-1 + S]) \\
    W &\sim Bernoulli(0.5 - 0.25S) \\
    A &\sim Bernoulli(0.3 + 0.1S + 0.1W + 0.1U) \\
    Y_0 &\sim N(1 + W + U, 0.1) \\
    Y_1 &\sim N(0.5W + U + A + 0.5WA, 0.1)
\end{align*}
To see that parallel trends holds in the simulation, note that for $a=0,1$: 
\begin{align*}
    \E[&Y_1(0)-Y_0(0)|A=a, S=1, W] \\
    &= \E\{[0.5W + U + (0) + 0.5W(0)] - [1+W+U]|A=a, S=1, W\} \\
    &= \E\{[0.5W  - 1 - W]|A=a, S=1, W\} \\
    &= -1 - 0.5W
\end{align*}
We applied each of the three proposed estimators for the PATT to each dataset with all models correctly specified, all models incorrectly specified, only outcomes models misspecified, and both selection and treatment models misspecified. We treated $U$ as an unmeasured variable in all analysis. For correctly specified models, all variables except $U$ were included with the above functional form, in misspecified outcome models we only include main terms for W and A, and in misspecified propensity models we dropped terms for S. The true PATT$=1.28$ was calculated by generating potential outcomes for 1 million observations. Results shown in Figure \ref{fig:sim_results} illustrate that IOW is biased whenever the propensity score for treatment and selection is misspecified, g-compuation is biased whenever the outcome model is misspecified, and that the doubly robust estimator is approximately unbiased  if either model is correct. Code to implement the simulation is available at \verb|https://github.com/audreyrenson/did_generalizability|.
\begin{figure}
    \centering
    \includegraphics[scale=1]{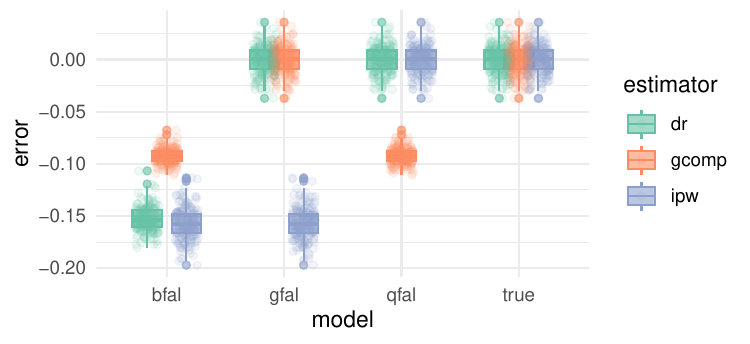}
    \caption{Results of each estimator applied to $nsims=200$ simulated datasets of $nobs=10,000$ each. $bfal$ indicates all models misspecified,  $qfal$ only outcomes models misspecified, $gfal$ both selection and treatment models misspecified, and $true$ all models correctly specified.}
    \label{fig:sim_results}
\end{figure}


\section{Example analysis}
Data on outcomes $Y_{it}$ in the study sample are measured from various monitors throughout the building and are not associated with a particular baseline survey respondent. Thus, we use building-level averages to represent outcomes for each respondent.  We concatenated data from the NYC SFH baseline survey with data from the 2015 AHS and coded a variable $S$ to track which dataset an observation came from. To account for the complex survey design of the AHS, we used an approach similar to those applied in previous generalizability/transportability analyses using survey weights \cite{ackerman2021generalizing, rudolph2022efficiently}. Specifically, respondents in the study sample were assigned survey weights of 1, respondents in the AHS sample were assigned their supplied design weights, and these weights were applied to all models and estimators. All models were saturated since $W$ is binary. We then constructed IOW, g-computation, and DR estimators based on these models, where in all cases the sample mean $P_n$ is taken with respect to the weighted, combined NYC SFH and AHS samples. Standard errors for these estimators were constructed using the jackknife replicate weights provided with the AHS data and a nonparametric bootstrap within the NYC SFH sample (we used 160 bootstrap replications to match the number of jackknife resamples). Code and data to implement the example analysis are provided at \verb|https://github.com/audreyrenson/did_generalizability|.

The stacked dataset contained $n=13,156$ observations, of which $n=1,029$ were from the NYC SFH study ($S=1$) and $n=12,127$ were from the AHS ($S=0$). The estimated SATT (95\% confidence interval [CI]) was -1.50 (-1.70, -1.31) (estimated from a simple unadjusted DID model). The estimated PATT was  -1.59 (-1.58, -1.60) using g-computation, -18.9 (-20.4, -17.3) using IOW, and -1.59 (-2.43, -0.75) using the DR estimator. That the transported estimates are more extremely negative than the SATT estimate reflects the facts that (i) the stratified effect estimate among those reporting indoor smoking (-1.97) is more extreme than those not reporting indoor smoking (-1.47), and the estimated proportion reporting indoor smoking is greater in the target population than the study sample (24\% vs. 7\%). 

\section{Discussion}\label{sec:disc}
This paper introduced an approach to estimating treatment effects in a target population based on DID conducted in a study sample that differs from the target population. Under certain assumptions, some of which may be understood with the aid of causal diagrams, we can identify the PATT, PATE, and PATU. We also propose several estimators of the aforementioned effects in the target population that only require measurement of covariates and/or treatments in the target population, not necessarily outcomes. This approach may be useful when, as is the case in our motivating example involving air nicotine, measurement of outcomes in the target population (in this case, the entire U.S.) may not be feasible, and unobserved confounding is present (in this case, ventilation) but those unobserved confounders do not modify the additive treatment effect. Though our approach assumed the same set of covariates were sufficient for internal and external validity, the methods can easily be adapted to settings where the covariates needed for external validity are a subset of those needed for internal validity. Though our approach has been framed around the problem of transportability (i.e., the study sample is not a subset of target population), our methods can easily be adapted to generalizability problems (when the study sample is nested in the target population). The approach may therefore also prove useful when a select group of jurisdictions (such as states or provinces) implement a policy, but decisions need to made a higher level of organization (such as national governments). 

It is important to note that our motivating example was greatly simplified for illustrative purposes; a full analysis to address the motivating question would likely be more complex. For example, one would need to carefully consider how the exclusion criteria may impact the plausibility of assumptions, whether other covariates would need to be measured, and whether differences in building management between public housing in NYC and other areas might violate treatment version irrelevance.

Causal diagrams have rarely been employed to understand identification in DID designs, but have been essential for elucidating the way causal structure impacts generalizability and transportability problems. By employing causal diagrams, we highlighted that the causal structure implied by unmeasured confounding that often motivates DID creates particular complexities for generalizability and transportability. Specifically, we were able to identify transported treatment effects under an assumption that the unmeasured confounders are not additive effect measure modifiers, but not necessarily otherwise.

The validity of an assumption that unmeasured confounders are not additive effect measure modifiers may be difficult to assess in practice. In our example, we possess no \textit{a priori} substantive information to suggest that the additive effect of a smoking ban on log-transformed air nicotine would be constant according to the building's level of ventilation (a presumed confounder). This suggests that, when transportability is of interest in DID studies, investigators should measure and adjust for as many potential confounders as possible (even if not formally needed for parallel trends) in order to reduce the number of variables for which we must make homogeneity assumptions. Future work will seek to develop bounds under violations of effect homogeneity along with methods to assess the sensitivity of conclusions to this key assumption. Additionally, derivations of asymptotic properties of the proposed estimators relied on parametric models for nuisance functions, which may be infeasible in practice. Doubly robust estimators of other estimands can often be shown to retain asymptotic normality when slower-converging, data-adaptive nuissance function estimators are used, though the same is not case for singly robust estimators based on g-computation or inverse weighting \cite{van2014higher,rotnitzky2021characterization}. Thus, the extension of our methods to relax model specification assumptions will also be of interest in future work.

\printbibliography
\begin{appendix}
\section{Proof of identification results}
\subsection{Proof of Theorem \ref{thm:trans_pt}}\label{app:trans_pt}
First we show identification for the PATT:
\begin{align*} 
     \E[Y_1(1)&-Y_1(0)|A=1, S=0] = \E\{ \E[Y_1(1)-Y_1(0)|W, U, A=1, S=0] |A=1,S=0 \} \\
     &=\E\{ \E[Y_1(1)-Y_1(0)|W, U, A=1, S=1] |A=1,S=0 \}  \\
     &=\E\{ \E[Y_1(1)-Y_1(0)|W, A=1, S=1] |A=1,S=0 \}\\
     &=\E( m_1(W) - m_0(W) |A=1,S=0 )
\end{align*}
where the first equality follow by law of iterated expectation (LIE); the second by Assumptions 
\ref{asn:noint}, \ref{asn:tvi}, \ref{asn:exch_u} \& \ref{asn:pos_u}; the third by Assumption \ref{asn:homo}; and the fourth by Assumptions \ref{asn:noint}-\ref{asn:pt}. Next consider identification of the PATU: 
\begin{alignat*} {3}
     \E[Y_1(1)&-Y_1(0)|A=0, S=0] = \E\{ \E[Y_1(1)-Y_1(0)|W, U, A=0, S=0] |A=0,S=0 \} \\
     &=\E\{ \E[Y_1(1)-Y_1(0)|W, U, A=0, S=1] |A=0,S=0 \} \\
    &=\E\{ \E[Y_1(1)-Y_1(0)|W, U, A=1, S=1] |A=0,S=0 \}  \\
     &=\E\{ \E[Y_1(1)-Y_1(0)|W, A=1, S=1] |A=0,S=0 \}\\
     &=\E( m_1(W) - m_0(W) |A=0,S=0 ) 
\end{alignat*}
where the first equality follows by LIE, the second by Assumptions 
\ref{asn:noint}, \ref{asn:tvi}, \ref{asn:exch_u} \& \ref{asn:pos_u}, the third by Assumptions 
    \ref{asn:noint}, \ref{asn:tvi}, \ref{asn:exch_u_tx} \& \ref{asn:pos_u_tx}, the fourth by \textbf{}, and the fifth by Assumptions \ref{asn:noint}-\ref{asn:pt}. Having identified the PATU and the PATT, the PATE is trivially identified:
\begin{align*}
    \E[Y_1(1)-Y_1(0)|S=0] &= \E[ \E(Y_1(1)-Y_1(0)|A, S=0)|S=0] \\
    &= \E[ \E (m_1(W) - m_0(W) |A, S=0)|S=0] \\
    &= \E[ m_1(W) - m_0(W) |S=0]
\end{align*}
\subsection{Identifying the SATE and SATU under additional assumptions}\label{app:sate}
Under Assumptions \ref{asn:noint}-\ref{asn:pt}, \ref{asn:exch_u_tx}, \ref{asn:pos_u_tx}, and \ref{asn:homo}, the SATE is identified as
\begin{align*}
    \E[Y_1(1)&-Y_1(0)|S=1] = \E\{\E[Y_1(1)-Y_1(0)|W, U, S=1]|S=1\}\\
    &= \E\{\E[Y_1(1)-Y_1(0)|W, U, A=1, S=1]|S=1\}  \\
    &= \E\{\E[Y_1(1)-Y_1(0)|W, A=1, S=1]|S=1\} \\
    &= \E\{m_1(W)-m_0(W)|S=1\},
\end{align*}
the first equality by LIE, the second by Assumptions \ref{asn:exch_u_tx} \& \ref{asn:pos_u_tx}, the third by Assumption \ref{asn:homo}, and the fourth by Assumptions \ref{asn:noint}-\ref{asn:pt}. Additionally the SATU is identified as:
\begin{align*}
    \E[Y_1(1)&-Y_1(0)|A=0, S=1] = \E\{\E[Y_1(1)-Y_1(0)|W, U, A=0, S=1]|A=0, S=1\}  \\
    &= \E\{\E[Y_1(1)-Y_1(0)|W, U, A=1, S=1]|A=0, S=1\} \\
    &= \E\{\E[Y_1(1)-Y_1(0)|W, A=1, S=1]|A=0, S=1\} \\
    &= \E\{m_1(W)-m_0(W)|A=0, S=1\},
\end{align*}
the first equality by LIE, the second by Assumptions \ref{asn:exch_u_tx} \& \ref{asn:pos_u_tx}, the third by Assumption \ref{asn:homo}, and the fourth by Assumptions \ref{asn:noint}-\ref{asn:pt}.

\section{Efficient influence function for transported treatment effects}\label{app:eif}
\numberwithin{equation}{section}

\setcounter{equation}{0}

To ease notation, we let $\Delta Y=Y_1-Y_0.$ We let $P_n\{ h(O) \}=n^{-1}\sum_{i=1}^n h(O_i)$ denote the sample mean of a function $h(\cdot)$ of the observed data $O$, $P$ dnote the true distribution of the observed data, and $\widehat P_n$ denote an estimator of the observed data distribution (which may or may not be the empirical distribution). Here we focus on the statistical estimand 
\begin{align*}\psi(a^*, P) &= \E[ m_1(W) - m_0(W)|A=a^*, S=0 ]\} \\
&= \int [m_1(z)-m_0(w)] f(w|A=a^*, S=0)dw
\end{align*}
where we write $\psi(a^*, P)$ as a function of $P$ to emphasize that it depends on the true observed data distribution.

\subsection{Proof of EIF}
\begin{theorem}\label{thm:if} The efficient influence function for $\psi(a^*)$ is given by
\begin{align*}
D(a^*, O_i, P)&=\frac{I(A_i=1, S_i=1) g_{a^*,0}(W_i)}{f(A=a^*, S=0)g_{1,1}(W_i)}
        \{\Delta Y_i - m_1(W_i) \}\\
    &\quad\quad - \frac{I(A_i=0, S_i=1) g_{a^*,0}(W_i)}{f(A=a^*, S=0)g_{0,1}(W_i)}
        \{\Delta Y_i - m_0(W_i) \} \\
    &\quad\quad + \frac{I(A_i=a^*,S_i=0)}{f(A=a^*, S=0)}\{m_1(W_i) - m_0(W_i)\}  - \psi(a^*)   
\end{align*} 
\end{theorem}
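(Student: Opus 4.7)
My plan is to exploit the fact that the observed-data model is nonparametric---the only structural feature is that $Y_0, Y_1$ are missing when $S=0$---so its tangent space equals all of $L^2_0(P)$ and any influence function is automatically efficient. I will therefore derive the \emph{unique} mean-zero influence function using the standard pathwise (Gateaux) derivative approach: for an arbitrary one-dimensional regular parametric submodel $\{P_\epsilon\}$ with score $s(O)$ at $\epsilon=0$, I seek $D$ satisfying $\E[D(O)] = 0$ and
\[
\frac{d}{d\epsilon}\,\psi(a^*, P_\epsilon)\Big|_{\epsilon=0} \;=\; \E[D(O)\, s(O)].
\]
I begin by writing $\psi(a^*, P) = \tilde\psi(P)/\pi(P)$ with $\pi(P) = f(A=a^*, S=0)$ and $\tilde\psi(P) = \E[I(A=a^*, S=0)\{m_1(W) - m_0(W)\}]$, then applying the quotient rule $D_\psi = \pi^{-1}(D_{\tilde\psi} - \psi\, D_\pi)$. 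The denominator piece is immediate: $D_\pi(O) = I(A=a^*, S=0) - \pi$.

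For $D_{\tilde\psi}$, I decompose $\tilde\psi = \tilde\psi_1 - \tilde\psi_0$ with $\tilde\psi_a = \E[I(A=a^*, S=0)\, m_a(W)]$ and apply the product rule in $\epsilon$. One term holds $m_a$ fixed and differentiates only the outer measure, contributing $I(A=a^*, S=0)\, m_a(W) - \tilde\psi_a$ after centering; the other holds the outer measure fixed and differentiates $m_a$ viewed as a functional of $P$. The latter is where the main technical work sits.

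For this ``indirect'' piece I would invoke the standard Gateaux identity for a conditional expectation, $\tfrac{d}{d\epsilon} m_{a,\epsilon}(w)|_{\epsilon=0} = \E[(\Delta Y - m_a(w))\, s(O) \mid W=w, A=a, S=1]$, integrate against the outer weight $g_{a^*,0}(w)$ obtained via iterated expectation on $I(A=a^*, S=0)$, and rewrite the resulting iterated integral as an expectation under the full $P$ by multiplying and dividing by $g_{a,1}(W)$. This yields $\E\bigl[\tfrac{g_{a^*,0}(W)}{g_{a,1}(W)}\, I(A=a, S=1)(\Delta Y - m_a(W))\, s(O)\bigr]$, and the non-outcome components of $s$ drop out because $\E[\Delta Y - m_a(W) \mid W, A=a, S=1] = 0$. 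The contribution to the IF is therefore $\tfrac{g_{a^*,0}(W)}{g_{a,1}(W)}\, I(A=a, S=1)(\Delta Y - m_a(W))$.

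Finally I would assemble: adding the direct and indirect pieces for $a=1$ and $a=0$, taking the difference to form $D_{\tilde\psi}$, substituting into the quotient rule, and using $\psi\,\pi = \tilde\psi$ to collapse the centering terms produces the displayed $D(a^*, O, P)$. The main bookkeeping challenge is the indirect term---tracking which score component survives at each stage---and I would close by sanity-checking the candidate, verifying directly that $\E[D] = 0$ and that $\E[D\, s]$ recovers the pathwise derivative for a generic score, the latter simultaneously confirming efficiency through uniqueness of the IF in the nonparametric model.
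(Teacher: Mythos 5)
Your overall strategy is the same one the paper uses: compute the pathwise (Gateaux) derivative of $\psi(a^*,P)$ via the product/quotient rule together with the standard influence functions for a conditional mean and for the probability $f(A=a^*,S=0)$. Your quotient-rule decomposition $\psi=\tilde\psi/\pi$ and your from-first-principles treatment of $\tfrac{d}{d\epsilon}m_{a,\epsilon}(w)$ are just a reparametrization of the paper's product-rule step applied to $\int\{m_1(w)-m_0(w)\}f(w\mid A=a^*,S=0)\,dw$, and your remark that the observed-data model is nonparametric (so the unique gradient is automatically the efficient one) actually supplies the efficiency claim more explicitly than the paper does.

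However, your final assembly claim is not right, and the discrepancy is worth naming precisely. Carrying out your own steps gives $D_{\tilde\psi} = I(A=a^*,S=0)\{m_1(W)-m_0(W)\}-\tilde\psi + \frac{g_{a^*,0}(W)}{g_{1,1}(W)}I(A=1,S=1)\{\Delta Y-m_1(W)\} - \frac{g_{a^*,0}(W)}{g_{0,1}(W)}I(A=0,S=1)\{\Delta Y-m_0(W)\}$ and $D_\pi=I(A=a^*,S=0)-\pi$, so $\pi^{-1}(D_{\tilde\psi}-\psi D_\pi)$ has centering term $\frac{I(A=a^*,S=0)}{f(A=a^*,S=0)}\{m_1(W)-m_0(W)-\psi(a^*)\}$; that is, $\psi(a^*)$ enters multiplied by $I(A=a^*,S=0)/f(A=a^*,S=0)$, not as the free-standing $-\psi(a^*)$ in the displayed statement. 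The two candidates differ by the non-degenerate mean-zero variable $\psi(a^*)\{I(A=a^*,S=0)/f(A=a^*,S=0)-1\}$, and only the weighted-centering version satisfies $\E[D\,s]=\tfrac{d}{d\epsilon}\psi(a^*,P_\epsilon)\vert_{\epsilon=0}$ for every score (try $s=I(A=a^*,S=0)-\pi$); the closing verification you propose would therefore flag the displayed $D$ as failing the pathwise-derivative identity rather than confirming it. The paper's own proof makes the same slip in its final ``rearranges'' step, where the factor $I(A=a^*,S=0)/f(A=a^*,S=0)$ is dropped from the subtracted integral; the correct form is the ATT-type gradient with the centering inside the indicator weight. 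The mismatch is immaterial for the one-step estimator, since $P_n\{I(A=a^*,S=0)\}/P_n\{I(A=a^*,S=0)\}=1$ makes both centerings yield the same $\widehat\psi_{dr}(a^*)$, but it does matter for the claim that $D$ is the efficient influence function and for the variance formula based on $\E[D^2]$. In short: your method is sound and in fact produces the correct canonical gradient; the error is in asserting that the assembly collapses to the displayed expression.
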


\begin{proof} Let $IF(\theta)$ denote the efficient influence function for parameter $\theta=\theta(P)$. Then we have:
   \footnotesize
   \begin{align*}
    IF\{ \psi(a^*, P) \}&=\int \bigg( IF\{m_1(w)\} - IF\{ m_0(w)\}  \bigg)f(w|A=a^*, S=0)dw\\
    &\qquad + \int \{ m_1(W)-m_0(w) \}IF\{ f(w|A=a^*, S=0) \}dw \\
    &=\int 
        \frac{I(A=1,S=1,W=w)}{f(A=1,S=1, W=w)}
        [\Delta Y - m_1(w)]f(w|A=a^*, S=0)dw \\
        &\qquad- \int 
        \frac{I(A=0,S=1,W=w)}{f(A=0,S=1, W=w)}
        [\Delta Y - m_0(w)]f(w|A=a^*, S=0)dw \\
    &\qquad+ \int \{ m_1(w)-m_0(w) \}\frac{I(A=a^*,S=0)}{f(A=a^*, S=0)}\{ I(W=w) -  f(w|A=a^*, S=0) \}dw \\
    &=\frac{I(A=1, S=1)g_{a^*, 0}(W)}{f(A=a^*, S=0)g_{1,1}(W)}[\Delta Y - m_1(W)] - \frac{I(A=0, S=1)g_{a^*, 0}(W)}{f(A=a^*, S=0)g_{0,1}(W)}[\Delta Y - m_0(W)] \\
    &\qquad + \frac{I(A=a^*,S=0)}{f(A=a^*,S=0)}[m_1(W)-m_0(W)] - \int \{ m_1(w)-m_0(w) \}f(w|A=a^*, S=0)dw
\end{align*}
\normalsize
where the first equality uses the fact that the efficient influence function is a derivative and applies the product rule for derivatives, the second substitutes known influence functions for conditional expectations and conditional densities/probability mass functions, and the third rearranges.
\end{proof}
\subsection{One-step estimator}
Because the efficient influence function for $\psi(a^*)$ is given by $D(a^*, O_i, P)$, a one-step estimator \cite{hines2022demystifying} is given by 
\begin{align*}
    \widehat\psi_{dr}(a^*)&=\psi(a^*, \widehat P_n) - P_n \{ D(a^*, O_i, \widehat P_n) \}\\
    &= P_n \bigg\{ \frac{I(A=1, S=1) \widehat g_{a^*,0}(W)}{ P_n( I[A=a^*, S=0] )\widehat g_{1,1}(W)}
        \{\Delta Y - \widehat m_1(W) \} \\
    &\qquad - \frac{I(A=0, S=1) \widehat g_{a^*,0}(W)}{P_n( I[A=a^*, S=0] ) \widehat g_{0,1}(W)}
        \{\Delta Y - \widehat m_0(W) \} \\
    &\qquad + \frac{I(A=a^*,S=0)}{ P_n( I[A=a^*, S=0] )}\{\widehat m_1(W) - \widehat m_0(W)  \} \bigg\}
\end{align*}
where $\widehat P_n=\{\widehat g_{a, s}, \widehat m_a\}$ with $\{a, s\}\in \{0,1\}^2.$ If $\widehat P_n$ consists of correctly-specified parametric models, it follows by the central limit theorem that
\[
\sqrt{n}[\widehat\psi_{dr}(a^*) - \psi(a^*)] \xrightarrow{d} N(0, \E[ D(a^*, O, \widehat P_n)^2 ] )
\]
where $\xrightarrow{d}$ denotes convergence in distribution. Thus, an estimator of the asymptotic variance that is consistent when $\widehat P_n$ consists of correctly-specified parametric models is given by 
\[
\widehat{var}[\widehat\psi_{dr}(a^*)]=P_n \{  D(a^*, O, \widehat P_n)^2 \}
\]

\section{Double robust property of $\widehat\psi_{dr}$}\label{app:dr}
In this section, for ease of notation we let $m_a\equiv m_a(W)$ and $g_{a,s}\equiv g_{a,s}(W)$. First, suppose $\widehat m_a \xrightarrow{p} m_a^*$ and $\widehat g_{a, s} \xrightarrow{p} g_{a, s}^*$, not necessarily assuming $g_{a, s} = g_{a, s}^*$ or $m_a = m_a^*$. Using Slutzky's theorem and continuous mapping theorem we have
\small
\begin{align*}
    \widehat \psi_{dr}(a^*) &\xrightarrow{p} 
    \psi^*(a^*) \equiv \E \bigg \{ 
         \frac{I(A=1, S=1) g_{a^*,0}^*}{f(A=a^*, S=0) g_{1,1}^*}
        \{\Delta Y - m_1^* \} - \frac{I(A=0, S=1) g_{a^*,0}^*}{f(A=a^*, S=0)g_{0,1}^*}
        \{\Delta Y - m_0^*\} \\
    &\quad\quad\quad + \frac{I(A=a^*,S=0)}{f(A=a^*, S=0)}\{ m_1^* - m_0^* \} \
    \bigg \} 
\end{align*}
\normalsize
First consider the case where outcome models are correctly specified, so that $m_a(w)= m_a^*(w)$ for all $w\in \mathcal{W}.$ By the linearity property of expectations, we have
\begin{align}
    \psi^*(a^*) &= \E \bigg \{ 
         \frac{I(A=1, S=1) g_{a^*,0}^*}{f(A=a^*, S=0) g_{1,1}^*}(\Delta Y - m_1^*) \bigg \} \label{eq:dr1}\\
         &\quad\quad - \E\bigg\{\frac{I(A=0, S=1) g_{a^*,0}^*}{f(A=a^*, S=0)g_{0,1}^*}(\Delta Y - m_0^*) \bigg\} \label{eq:dr2}\\
         &\quad\quad + \E\bigg\{\frac{I(A=a^*,S=0)}{f(A=a^*,S=0)}(m_1^*-m_0^*)\bigg\} \label{eq:dr3}
\end{align}
Under $m_a= m_a^*,$ we have
\begin{align*}
    (\ref{eq:dr1})&=\E\bigg \{
        \frac{I(A=1,S=1)g_{a^*,0}^*}{f(A=a^*,S=0)g_{1,1}^*}
        (\E[\Delta Y |A=1,S=1,W] - m_1 ) 
    \bigg\} \\
    &=0
\end{align*}
Likewise, $(\ref{eq:dr2})=0.$ Therefore, $\psi^*(a^*)=\E\bigg\{\frac{I(A=a^*,S=0)}{f(A=a^*,S=0)}(m_1-m_0)\bigg\}=\E[m_1-m_0|A=a^*,S=0]=\psi(a^*).$ Moreover, because (\ref{eq:dr3}) is the probability limit of the g-computation estimator, we have that $\widehat\psi_{gcomp}(a^*)\xrightarrow{p}\psi(a^*)$ whenever $\widehat m_a \xrightarrow{p} m_a.$ \\

Next, consider the case where treatment and selection models are correctly specified, so that $g_{a,s}(w)=g_{a,s}^*(w)$ for all $w \in \mathcal{W}$. Rearranging we have
\begin{align}
    \psi^*(a^*) &= \E \bigg \{ 
        \bigg (
         \frac{I(A=1, S=1) g_{a^*,0}^*}{f(A=a^*, S=0) g_{1,1}^*} - \frac{I(A=0, S=1) g_{a^*,0}^*}{f(A=a^*, S=0)g_{0,1}^*}
        \bigg )  \Delta Y
        \bigg \} \label{eq:dr4}\\
    & \quad + \E \bigg\{
        \bigg (
            \frac{I(A=0,S=1)g_{a^*,0}^*}{f(A=a^*,S=0)g_{0,1}^*}
            -
            \frac{I(A=1, S=0)}{f(A=a^*, S=0)}
        \bigg ) m_0^*
    \bigg\} \label{eq:dr5}\\
    & \quad - \E \bigg\{
        \bigg (
            \frac{I(A=1,S=1)g_{a^*,0}^*}{f(A=a^*,S=0)g_{1,1}^*}
            -
            \frac{I(A=a^*, S=0)}{f(A=a^*, S=0)}
        \bigg ) m_1^*
    \bigg\} \label{eq:dr6}
\end{align}
Since $g_{a, s}^* = g_{a, s}$, we have
\footnotesize
\begin{align*}
    (\ref{eq:dr4}) &= \E \bigg \{ 
        \bigg (
         \frac{I(A=1, S=1) f(A=a^*,S=0|W)}{f(A=a^*, S=0) f(A=1,S=1|W)} - \frac{I(A=0, S=1) f(A=a^*,S=0|W)}{f(A=a^*, S=0)f(A=0,S=1|W)}
        \bigg )  \Delta Y
        \bigg \}\\
        &= \E \bigg \{ 
         \frac{I(A=1, S=1) f(A=a^*,S=0|W)}{f(A=a^*, S=0) f(A=1,S=1|W)}m_1(W) - \frac{I(A=0, S=1) f(A=a^*,S=0|W)}{f(A=a^*, S=0)f(A=0,S=1|W)}m_0(W)
        \bigg \} \\
        &= \E \bigg \{ 
         \frac{\E[I(A=1, S=1)|W] f(A=a^*,S=0|W)}{f(A=a^*, S=0) f(A=1,S=1|W)}m_1(W) \\
        &\qquad \qquad- \frac{\E[I(A=0, S=1)|W] f(A=a^*,S=0|W)}{f(A=a^*, S=0)f(A=0,S=1|W)}m_0(W)
        \bigg \} \\
        &= \E \bigg \{ 
         \frac{f(A=a^*,S=0|W)}{f(A=a^*, S=0)}[m_1(W) - m_0(W)]
        \bigg \} \\
        &= \E \bigg \{ 
         \frac{I(A=a^*, S=0)}{f(A=a^*, S=0)}[m_1(W) - m_0(W)]
        \bigg \} \\
        &=\psi(a^*)  \\
    (\ref{eq:dr5}) &=\E\bigg \{
    \bigg (
        \frac{\E[I(A=0, S=1)|W] f(A=a^*,S=0|W)}{f(A=a^*,S=0)f(A=0, S=1|W)} - \frac{I(A=a^*,S=0)}{f(A=a^*,S=0)} 
    \bigg ) m_0^* 
        \bigg \} \\
    &=\E\bigg \{
    \bigg (
        \frac{f(A=a^*,S=0|W)}{f(A=a^*,S=0)} - \frac{I(A=a^*,S=0)}{f(A=a^*,S=0)} 
    \bigg ) m_0^* 
        \bigg \} \\
    &=0 
\end{align*}
\normalsize
Likewise, (\ref{eq:dr6})=$\E\bigg\{ \bigg ( \frac{f(A=a^*,S=0|W)}{f(A=a^*,S=0)} - \frac{I(A=a^*,S=0)}{f(A=a^*,S=0)} \bigg ) m_1^* \bigg\}=0.$  Thus if $\widehat g_{a, s}(w) \xrightarrow{p} g_{a, s}(w)$, $\widehat \psi_{dr}(a^*) \xrightarrow{p} \psi(a^*).$ Moreover, because (\ref{eq:dr4}) is the probability limit of the IOW estimator, we have that $\widehat\psi_{IOW}(a^*)\xrightarrow{p}\psi(a^*)$ whenever $\widehat g_{a,s} \xrightarrow{p} g_{a,s}$.

\end{appendix}
\end{document}